\documentclass[aps,pra,reprint,superscriptaddress,nofootinbib,floatfix]{revtex4-2}

\usepackage{amsmath,amssymb,mathtools}
\usepackage{amsthm}
\usepackage{pgfplots}
\usepackage[colorlinks=true,linkcolor=blue,citecolor=blue,urlcolor=blue]{hyperref}
\pgfplotsset{compat=1.18}

\newtheorem{proposition}{Proposition}

\newcommand{\conv}{\operatorname{conv}}
\newcommand{\STAB}{\operatorname{STAB}}

\begin{document}

\title{Operational Shadows of Hilbert-Space Probabilities}

\author{Karl Svozil}
\affiliation{Institute for Theoretical Physics, TU Wien,
Wiedner Hauptstra{\ss}e 8-10/136, 1040 Vienna, Austria}
\email{karl.svozil@tuwien.ac.at}

\date{\today}

\begin{abstract}
At one frozen setting, the probabilities observed in a sharp quantum context are indistinguishable, as detector-click statistics, from ordinary probabilities on the atoms of a classical partition. But an actual analyzer usually comes with a calibrated knob: a tangible handle on the apparatus. If the measurement configuration is co-varied continuously through this physical parameter, the operational object is no longer one point of a simplex but a response curve. Classical linear responses, Malus-type Hilbert-space responses, softmax links, non-homomorphic parameter transcriptions, and discontinuous threshold limits are different maps from settings to probabilities. Continuity, calibration, and preservation of the physical composition law are then part of the experimental meaning of the knob. Such comparisons distinguish specified, calibrated response models; by themselves they do not constitute a classical-versus-quantum impossibility theorem. The static operational coincidence can also persist for two intertwined contexts: if the common outcomes receive the same probabilities, the remaining masses can always be coupled by a classical joint distribution. Genuine multi-context nonclassicality begins when a family of local shadows cannot be glued into one nonnegative global distribution or one simplex factorization. Farkas' lemma gives the exact alternative: either the classical extension exists, or a separating linear inequality certifies its impossibility.
\end{abstract}

\maketitle

\section{Introduction}

A hand held before a lamp casts a flat shadow on a wall.  Many different
hands, postures, and objects can cast the same silhouette.  If the wall is the
only thing one is allowed to inspect, the shadow is all one has; but it should
not be mistaken for the object that produced it.  This is the old warning of
Plato's cave: the prisoners take the passing shadows for reality.  The image
contains a second warning that is just as relevant here.  The one who turns
around and returns with an account of the objects and the light behind the
shadows is not automatically believed; he may appear blinded or confused, and
the cave's inhabitants may punish the attempt to free them from their own
evidence~\cite[Book VII, 514a--517a]{plato-republic}.  Thus the metaphor cuts
both ways.  A shadow should not be reified as the thing itself, but neither is
the hidden machinery behind it given by the shadow alone.  To speak of an
ontology casting the shadow is already to add a reconstruction, a model of
what lies off the wall.

Something similar happens in a single sharp measurement.  A system is sent
into a detector bank, one detector clicks, and after many repetitions one has
a list of relative frequencies.  This list is a point of a probability
simplex.  Looking only at that list, one cannot tell whether it was generated
by a classical partition, by squared Hilbert-space amplitudes, or by a more
exotic admissible weight rule.  The generating geometries are different, but
a single frozen setting can flatten them into the same operational shadow.

The missing character is often the knob: a tangible handle on the apparatus.
A polarizer rotates, a Stern--Gerlach magnet tilts, a phase is advanced, or,
in a more abstract response model, a temperature or score is dialed.  In the
rotational examples, turning the knob is itself part of the experiment: equal
increments can be calibrated, small changes can be made continuously, and
successive transformations compose.  The knob is therefore not merely a
coordinate on an interval; it carries a physical action.  Once this variation
is retained, the shadow is no longer a single point but a trajectory through
the simplex.  A specified classical response model may hit any one point of a
Hilbert-space response curve, but it need not reproduce the whole curve with
the same symmetry-preserving knob.  This is the intuition behind the
``parameter cheats'': they work by deforming the scale on which the knob is
read, thereby changing the physical action represented by equal turns of the
apparatus.

This article separates four levels.  One static context gives an operational
shadow.  A calibrated sweep compares response laws within a specified model
class.  Two compatible context distributions always admit a classical
coupling.  Only a sufficiently constrained web of contexts can fail to extend
to one global nonnegative distribution, in which case Farkas' lemma yields a
separating inequality.  The response-law discussion is methodological: a
single calibrated curve does not, without additional covariance, locality, or
ontic-response assumptions, certify Hilbert-space probability or exclude
arbitrary classical stochastic models.

A continuously swept apparatus can indeed be represented formally as a
continuum of mutually incompatible quantum contexts: each fixed setting
determines a different maximal projective measurement.  That observation is
not, however, the main use of the knob in this article.  The sweep is treated
first as an experimentally calibrated response measurement---a map from a
physical control variable to click probabilities---rather than as a
quantum-logical pasting problem over a continuum of contexts.  The group
action and continuity belong to the kinematics of the apparatus; they do not
by themselves supply a nonclassicality witness or an
orthomodular-algebraic obstruction.  Such an obstruction arises only when one
asks whether the probabilities from a specified family of contexts can be
glued into one global nonnegative model.

\section{One static context}

Let \(C=\{P_1,\ldots,P_d\}\) be a maximal rank-one projective
measurement on a \(d\)-dimensional Hilbert space.  In a run of the experiment
only one detector clicks, and the operational data associated with this
single context are the frequencies
\begin{equation}
        p_i\ge 0,\qquad \sum_{i=1}^d p_i=1 .
\end{equation}
They are the same kind of object as a probability distribution on the atoms
of a classical partition.  The phrase ``same operational shadow''%
\footnote{This use of ``shadow'' is the Plato-cave metaphor used here, not
the randomized measurement technique known in quantum information as
``classical shadows''.}
will mean this: the observable click statistics in the context are identical,
even if the underlying state spaces used to generate them are different.

The difference between the two theories is already present in the generating
maps.  A classical state on a \(d\)-atom partition is a point of the simplex
\(\Delta_{d-1}\), with pure states at the vertices and mixtures represented
affinely.  A pure quantum state is a ray in complex projective space,
\([\psi]\in\mathbb{CP}^{d-1}\), and Born's rule gives
\begin{equation}
        s_C([\psi]) =
        \bigl(\langle\psi|P_1|\psi\rangle,\ldots,
        \langle\psi|P_d|\psi\rangle\bigr).
        \label{eq:shadow-map}
\end{equation}
In an eigenbasis \(P_i=|e_i\rangle\langle e_i|\), this is the squared
projection map \(p_i=|\langle e_i|\psi\rangle|^2\).  Thus the same probability
simplex is reached through a different geometry: affine weights in the
classical case, squared Hilbert-space amplitudes in the quantum case.

\begin{proposition}
For one maximal rank-one context \(C\), the pure-state shadow map
\(s_C:\mathbb{CP}^{d-1}\to\Delta_{d-1}\) is onto.
\end{proposition}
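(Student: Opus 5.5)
The plan is to prove surjectivity by writing down an explicit preimage for every point of the simplex, using square roots of the target probabilities as amplitudes. Since $C$ is a maximal rank-one context, the projectors are $P_i=|e_i\rangle\langle e_i|$ for an orthonormal eigenbasis $\{|e_1\rangle,\ldots,|e_d\rangle\}$ of the $d$-dimensional Hilbert space. By the remark after Eq.~\eqref{eq:shadow-map}, the shadow map then reduces to $p_i=|\langle e_i|\psi\rangle|^2$. The map is well defined on rays: replacing $\psi$ by $\lambda\psi$ with $|\lambda|=1$ leaves every $|\langle e_i|\psi\rangle|^2$ unchanged. Moreover, normalization $\langle\psi|\psi\rangle=1$ together with completeness $\sum_i P_i=\mathbb{1}$ gives $\sum_i p_i=1$, so $s_C$ indeed lands in $\Delta_{d-1}$.

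Given an arbitrary $(p_1,\ldots,p_d)\in\Delta_{d-1}$, I will take
\begin{equation*}
|\psi\rangle=\sum_{i=1}^d \sqrt{p_i}\,|e_i\rangle ,
\end{equation*}
with nonnegative real square roots. These exist because every $p_i\ge 0$. The norm is $\sum_i p_i=1$, so $\psi\neq 0$ and $[\psi]$ is a genuine point of $\mathbb{CP}^{d-1}$. Orthonormality then gives $\langle e_i|\psi\rangle=\sqrt{p_i}$, hence $s_C([\psi])_i=p_i$ for every $i$, which proves that $s_C$ is onto.

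No step here is a real obstacle; the only points requiring care are that the vertices and faces of the simplex, where some $p_i=0$, are covered as well (they are, since zero amplitudes are allowed) and that the chosen vector is nonzero. It is also worth remarking that each preimage is far from unique: any phases $e^{i\phi_j}$ may multiply the amplitudes $\sqrt{p_j}$. The fibers are therefore tori modulo a global phase, which is exactly the information flattened away by the shadow.
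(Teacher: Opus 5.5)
Your proof is correct and is essentially the paper's own: the paper likewise takes $|\psi\rangle=\sum_i\sqrt{p_i}\,e^{i\phi_i}|e_i\rangle$, except that it keeps arbitrary phases in the construction, whereas you set them to zero and discuss them afterwards. Your closing remark on the fibers needs one small refinement: over a face where only $k$ of the $p_j$ are nonzero, the fiber is a $(k-1)$-torus, and at a vertex it is a single point, because phases on vanishing amplitudes have no effect.
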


\begin{proof}
Given \(p=(p_1,\ldots,p_d)\in\Delta_{d-1}\), choose arbitrary phases
\(\phi_i\) and set
\begin{equation}
        |\psi\rangle
        =
        \sum_{i=1}^d \sqrt{p_i}\,e^{i\phi_i}|e_i\rangle .
\end{equation}
Then \(\langle\psi|P_i|\psi\rangle=p_i\).  The phases, modulo a global phase,
are invisible to this context.  Hence a single context sees the full ordinary
probability simplex and nothing beyond it.
\end{proof}

This proposition is deliberately modest.  It does not say that the classical
and quantum state spaces are isomorphic.  It says only that after forgetting
everything except one detector bank's click frequencies, both theories cast
the same operational shadow.  The phrase ``forgetting'' is essential: it
freezes the apparatus setting and discards the way in which the probabilities
change when the apparatus is moved.

\section{Calibrated response laws}

In the laboratory the setting is rarely just an abstract label.  A polarizer
is rotated, a Stern--Gerlach magnet is tilted, a phase is advanced, or a score
parameter is changed.  The detector bank may still be the same operational
device, but the measurement configuration is co-varied with a physical
parameter.  Once this knob is retained, the observed data are not merely
\(p\in\Delta_{d-1}\), but the curve swept out by \(p\) as the knob is turned.
If \(K\) denotes the setting space of the knob, the relevant object is the
response function
\begin{equation}
        R_C:K\longrightarrow \Delta_{d-1},\qquad
        k\longmapsto (p_1(k),\ldots,p_d(k)).
        \label{eq:response-function}
\end{equation}
A classical partition can hit any one point of this curve.  That is not the
same as reproducing the curve with the same independently calibrated physical
action.

A calibrated knob is not just an arbitrary coordinate on an interval.  More
generally, its setting space \(K\) carries the action of a physically
implemented transformation group \(G\), and is often a homogeneous space
\(G/H\).  An admissible change of coordinates should preserve this action, or
at least intertwine it with the corresponding physical action on the
apparatus.  Continuity is part of this structure, but it is not the whole
structure: a continuous deformation of the coordinate can still fail to
preserve the action.

For a one-parameter planar rotation, the composable knob variable should be a
signed angle
\(\varphi\in\mathbb{R}/2\pi\mathbb{Z}\), or the corresponding quotient
appropriate to unoriented axes.  The scalar \(\theta\in[0,\pi]\) used below
is an unsigned relative separation derived from that action.  It is suitable
as the argument of an isotropic response curve, but it does not itself obey a
global addition law.  For general analyzer orientations the acting group is
\(SO(3)\), and the scalar separation likewise should not be identified with a
group element.  A harmless reparameterization must therefore intertwine the
physical group action, not merely relabel the scalar argument of a plotted
curve.  Such calibration can be checked independently of the output
probabilities by composing transformations, verifying the appropriate
period, or using a phase reference.  Equal increments, composed rotations,
and continuous sweeps are operationally implementable and comparable.

One may object that, once the knob is turned, one has already left the
original context and entered a many-context situation.  Formally this can be
a useful description: each fixed angle may be represented by its own maximal
context.  But that is not the emphasis here.  A knob sweep is not primarily a
finite list of unrelated configurations to be glued by a marginal-extension
test.  It is a functional response to an argument, such as
\(\theta\mapsto p(\theta)\), where the argument has a physical scale,
continuity, and composition or symmetry law.  The question is therefore not
only which static contexts are present, but how the probabilities vary when
the apparatus is continuously moved.

Within a specified model class, this is already enough to distinguish
response laws along a single calibrated sweep.  For example, a deterministic
shared-direction model with a uniformly distributed hidden orientation and a
threshold readout gives a linear equal-outcome response, whereas a spin-\(1/2\)
singlet gives the curved Malus-type response.  This comparison does not
exclude all classical stochastic response models for one curve; locality and
global-consistency restrictions enter only when several settings must be
represented together.  For these binary examples write
\(R(\theta)=P^{=}(\theta)\).  On \(0\le\theta\le\pi\) one may write
\begin{equation}
\begin{aligned}
        P^{=}_{\rm cl}(\theta)&=\frac{\theta}{\pi},\\
        P^{=}_{\rm qm}(\theta)&=\sin^2\frac{\theta}{2},\\
        P^{=}_{\rm s}(\theta)&=
        H\!\left(\frac{2\theta}{\pi}-1\right),
\end{aligned}
\label{eq:three-response-curves}
\end{equation}
where \(H(0)=1/2\) fixes the convention at the threshold.  The last line is a
discontinuous toy response, sometimes used as a stronger-than-quantum limit
in a multi-setting construction~\cite{svozil-krenn}; the curve by itself is
not a no-signalling multi-setting model.  At a single value of \(\theta\),
each line gives only an ordinary two-outcome probability distribution.  As
\(\theta\) is co-varied, however, the variational response is completely
different: linear, smoothly curved, or step-like.

The 2001 ``parameter cheat'' construction makes this point particularly
transparent.  If one introduces a new parameter \(\delta\) by
\(\theta=\pi\sin^2(\delta/2)\), then the classical expression
\(P^{=}_{\rm cl}(\theta)=\theta/\pi\) can be made to look quantum:
\[
        P^{=}_{\rm cl}(\theta(\delta))
        =
        \sin^2\frac{\delta}{2}.
\]
Conversely, the quantum curve can be made to look linear by displaying it
against the parameter
\(\eta=\pi\sin^2(\theta/2)\).  Pointwise, the trick works.  But the price is
that the new parameter is no longer the same physical knob.  The
transcription is not a homomorphism, or more generally an equivariant map, for
the calibrated rotation action~\cite{svozil-2001-cesena}.  It therefore
changes the calibration through which the apparatus is varied rather than
merely relabeling probabilities.

\subsection{Softmax links and inverse knobs}

Softmax belongs to the same discussion, although its natural knob is often a
score, utility, or inverse-temperature parameter rather than a literal angle.
It is included here only as an abstract response law into a simplex, not as a
physical analogue of Born's quadratic amplitude rule.  Given scores
\(u_C(a)\) and a strictly positive, usually monotone, link \(g\), it forms
\begin{equation}
        P_{g,C}(a)
        =
        \frac{g(u_C(a))}
        {\sum_{b\in C}g(u_C(b))}.
        \label{eq:local-softmax}
\end{equation}
For \(g_\beta(u)=\exp(\beta u)\), the parameter \(\beta\) interpolates between
an almost uniform response at small \(\beta\) and a winner-take-all threshold
as \(\beta\to\infty\).  This is useful as a link-function deformation, not as
a claim that the exponential family has the same physical origin as Born's
rule.

Changing \(g\), or changing the score scale, changes the response sensitivity
of the model.  For a single context this again guarantees only local
normalization; it does not by itself specify which score scale is physically
proper.  If the same atom occurs in several contexts, admissibility adds a
separate requirement: the response must be single-valued on that atom.  Thus
a softmax fit, a Malus curve, a linear classical curve, and a Heaviside step
are not competing physical explanations.  They are different response
functions tied to different assumptions about what is being varied and which
structure that variation is required to preserve.

For a single monotone response curve this observation can be made completely
explicit.  Let \(R_A(\theta)\) and \(R_B(\theta)\) be two response functions,
written here as equal-outcome probabilities.  Equivalently, one may use the
correlation \(E(\theta)=2R(\theta)-1\).  If \(R_A\) is invertible on the range
of \(R_B\), then
\begin{equation}
        T_{A\leftarrow B}(\theta)
        =
        R_A^{-1}(R_B(\theta))
        \label{eq:inverse-knob}
\end{equation}
is the transcribed knob: feeding \(T_{A\leftarrow B}(\theta)\) into response
law \(A\) makes it reproduce response law \(B\) as a function of the
displayed parameter \(\theta\).  The injectivity assumption is substantive.
Without a single-valued inverse one obtains at best a set-valued
transcription or a limiting prescription, as in a threshold response.

For the classical linear response
\(R_{\rm cl}(\theta)=\theta/\pi\) and the quantum singlet response
\(R_{\rm qm}(\theta)=\sin^2(\theta/2)\), this gives
\begin{align}
        T_{{\rm cl}\leftarrow{\rm qm}}(\theta)
        &=\pi\sin^2\frac{\theta}{2},&
        T_{{\rm qm}\leftarrow{\rm cl}}(\theta)
        &=2\arcsin\sqrt{\frac{\theta}{\pi}}.
        \label{eq:inverse-classical-quantum}
\end{align}
These are precisely the parameter cheats: the first makes a classical linear
response look quantum; the second makes the quantum response look classical.

For a centered binary softmax response with parameter \(\beta > 0\), define
\begin{equation}
        P^{=}_{\sigma,\beta}(\theta)
        \equiv R_{\sigma,\beta}(\theta)
        =
        \left\{1+\exp[-\beta(2\theta/\pi-1)]\right\}^{-1}.
        \label{eq:binary-softmax-angle}
\end{equation}
This is the softmax counterpart of the response laws in
Eq.~\eqref{eq:three-response-curves}.  As \(\beta\downarrow0\) it approaches
the uniform (constant-in-$\theta$) binary response \(P^{=}(\theta)=1/2\); as \(\beta\to\infty\) it
approaches the Heaviside threshold response (with convention \(H(0)=1/2\)).

The corresponding inverse knobs are
\begin{align}
        T_{{\rm cl}\leftarrow\sigma}(\theta)
        &=
        \pi R_{\sigma,\beta}(\theta),
        \label{eq:inverse-classical-softmax}\\
        T_{\sigma\leftarrow{\rm cl}}(\theta)
        &=
        \frac{\pi}{2}
        \left[
        1+\frac{1}{\beta}
        \log\frac{\theta/\pi}{1-\theta/\pi}
        \right],
        \label{eq:inverse-softmax-classical}
\end{align}
where the second formula is finite only for \(0<\theta/\pi<1\).  For an input
knob restricted to \([0,\pi]\), the inverse
\(T_{\sigma\leftarrow{\rm cl}}(\theta)\) lies in that physical interval
precisely when
\begin{equation}
        \frac{1}{1+e^\beta}
        \le
        \frac{\theta}{\pi}
        \le
        \frac{1}{1+e^{-\beta}}.
        \label{eq:softmax-inverse-domain}
\end{equation}
Outside this range the inverse is still a formal transcription, but not an
admissible setting of the bounded softmax knob.  Similarly,
\[
        T_{{\rm qm}\leftarrow\sigma}(\theta)
        =
        2\arcsin\sqrt{R_{\sigma,\beta}(\theta)},
\]
and \(T_{\sigma\leftarrow{\rm qm}}\) is obtained by inserting
\(\sin^2(\theta/2)\) into the logit expression in
Eq.~\eqref{eq:inverse-softmax-classical}.  The Heaviside response
\(R_{\rm H}(\theta)=H(2\theta/\pi-1)\) can be transcribed into the classical
knob as
\(T_{{\rm cl}\leftarrow{\rm H}}(\theta)=\pi R_{\rm H}(\theta)\), but it has
no ordinary inverse: an interval of angles is collapsed to one of two values.
Its inverse is set-valued, or else appears only as a limiting threshold.

\begin{figure}[htbp]
\centering
\begin{tikzpicture}
\begin{axis}[
    width=\columnwidth,
    height=0.52\columnwidth,
    xmin=0,xmax=1,
    ymin=0,ymax=1,
    xlabel={displayed angle \(\theta/\pi\)},
    ylabel={classical knob \(T_{{\rm cl}\leftarrow B}(\theta)/\pi\)},
    legend style={font=\scriptsize,at={(0.02,0.98)},anchor=north west},
    ticklabel style={font=\scriptsize},
    label style={font=\scriptsize},
    samples=200
]
\addplot[black,dashed,domain=0:1] {x};
\addlegendentry{classical}
\addplot[blue,thick,domain=0:1] {sin(90*x)^2};
\addlegendentry{quantum}
\addplot[red,thick,domain=0:1] {1/(1+exp(-6*(2*x-1)))};
\addlegendentry{$\;$softmax \(\beta=6\)}
\addplot[orange,thick] coordinates {(0,0) (0.5,0) (0.5,1) (1,1)};
\addlegendentry{Heaviside}
\end{axis}
\end{tikzpicture}
\caption{Classical inverse knobs for several target response laws.  Since
\(R_{\rm cl}(\tau)=\tau/\pi\), the transcribed classical knob is simply
\(T_{{\rm cl}\leftarrow B}(\theta)/\pi=R_B(\theta)\).  The curves show how a
linear classical response must be read on a generally
non-symmetry-preserving scale in order to mimic a quantum, softmax, or
Heaviside response as a function of the displayed angle.}
\label{fig:inverse-knobs}
\end{figure}

This single-curve trick should not be confused with a solution of a Bell or
Boole problem.  In a Bell--Clauser--Horne--Shimony--Holt (CHSH) experiment one
does not merely fit one function \(E(\theta)\).  One must assign probabilities
to several settings at once, such as \(a,b,a',b'\), with one common
underlying distribution over predetermined answers.  A separate inverse knob
can make each pairwise correlation look right, but those pairwise
deformations need not come from one action-equivariant calibration.  For a
signed planar parameter \(\varphi\), the transcribed map generally fails to
satisfy
\[
        T(\varphi_1+\varphi_2)
        =
        T(\varphi_1)+T(\varphi_2);
\]
for general analyzer orientations, the corresponding failure is
nonintertwining of the \(SO(3)\) action.  Thus the deformed pairwise settings
need not be compatible with one geometry of four settings.  The
marginal-extension problem then reappears, and Farkas' lemma supplies the
relevant obstruction.

Elastic-band examples make this warning concrete.  Aerts' spin machine and
its extended-Bloch descendants use a hidden break point, or more generally a
hidden measurement interaction, to generate Born-like responses from an
ordinary macroscopic mechanism~\cite{aerts-69,Aerts_2014,Aerts_2016}.  The
rubber-band analysis of context communication stresses the complementary
limitation: recovering a cosine-shaped pair correlation in one adapted
arrangement is not yet the same as a Bell-local simulation of an entire CHSH
family~\cite{svozil-2022-epr}.  If the band or shared resource has to be
realigned when the selected pair
\((a,b),(a,b'),(a',b),(a',b')\) changes, then the construction has acquired
setting dependence or context communication.  The shadow can be made to look
quantum after breakfast, lunch, teatime, and supper, but the four shadows no
longer arise from one setting-independent joint valuation.

In Sven Aerts' Popescu--Rohrlich-box rubber-band construction, the same point
appears in a more extreme form: the resource is one extended object whose
breaking co-creates the two outcomes~\cite{AertsSven}.  These models are
relevant here not as refutations of Boole--Bell, but as physical parables of
where the response-function program stops.  A single curve can be
reparameterized, whereas a Bell web requires one common nonadaptive geometry
or else a setting-dependent, nonlocal, or nonseparable resource.

\section{Two contexts and global extensions}
\label{sec:two-contexts}

The same logic explains why two contexts, even when they share outcomes, are
not yet enough to force a probability-level obstruction.  Let
\begin{equation}
        A=S\cup X,\qquad B=S\cup Y
\end{equation}
be two Boolean blocks, with shared atoms \(S\), atoms \(X\) belonging only to
\(A\), and atoms \(Y\) belonging only to \(B\).  Suppose context-wise
probabilities \(p^A\) and \(p^B\) are normalized, nonnegative, and agree on
the overlap:
\begin{equation}
        p^A_s=p^B_s=:r_s\qquad (s\in S).
        \label{eq:overlap}
\end{equation}
Let \(m=1-\sum_{s\in S}r_s\).  Then
\begin{equation}
        \sum_{x\in X}p^A_x=m=\sum_{y\in Y}p^B_y .
\end{equation}
If \(m>0\), choose, for example,
\begin{equation}
        w_{xy}=\frac{p^A_x p^B_y}{m};
        \label{eq:two-context-coupling}
\end{equation}
if \(m=0\), set all \(w_{xy}=0\).  Now define a classical sample space with
one point \(\lambda_s\) of weight \(r_s\) for each shared atom \(s\), and one
point \(\lambda_{xy}\) of weight \(w_{xy}\) for each pair
\(x\in X,y\in Y\).  At \(\lambda_s\), the shared atom \(s\) is the true answer
in both contexts.  At \(\lambda_{xy}\), the answer to \(A\) is \(x\) and the
answer to \(B\) is \(y\).  Reading the \(A\)-coordinate gives \(p^A\), and
reading the \(B\)-coordinate gives \(p^B\).

\begin{proposition}
Every pair of normalized context distributions satisfying
Eq.~\eqref{eq:overlap} has a classical joint model over deterministic answers
for the two blocks.
\end{proposition}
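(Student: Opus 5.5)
The plan is to verify that the sample space constructed just before the statement is a genuine classical joint model. Three things must be checked. First, the weights are nonnegative and sum to one. Second, each point \(\lambda\) assigns a deterministic answer to each block, and that answer is single-valued on shared atoms. Third, the marginal read off by either block coincides with the prescribed distribution. Nonnegativity is immediate: \(r_s=p^A_s\ge0\), and \(w_{xy}\) is a product of nonnegative numbers divided by \(m>0\) (or is zero when \(m=0\)).

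Next I will use the mass identity \(\sum_{x\in X}p^A_x=m=\sum_{y\in Y}p^B_y\), which follows from normalization of \(p^A\) and \(p^B\) together with the overlap condition Eq.~\eqref{eq:overlap}. When \(m>0\) this gives the row and column sums of the coupling Eq.~\eqref{eq:two-context-coupling}:
\[
\sum_{y\in Y}w_{xy}=\frac{p^A_x\,m}{m}=p^A_x,\qquad
\sum_{x\in X}w_{xy}=p^B_y .
\]
The total weight is then \(\sum_s r_s+\sum_{x,y}w_{xy}=(1-m)+m=1\).

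For the marginals, reading the \(A\)-coordinate assigns \(s\in S\) the weight of \(\lambda_s\) alone, which is \(r_s=p^A_s\). It assigns \(x\in X\) the weight of all \(\lambda_{xy}\) with that \(x\), which is \(p^A_x\) by the row-sum identity. The \(B\)-coordinate is handled symmetrically, using \(r_s=p^B_s\) and the column sums. Consistency on shared atoms holds because at \(\lambda_s\) both blocks answer \(s\), while at \(\lambda_{xy}\) neither block answers any shared atom. Each point therefore defines a consistent two-valued assignment on \(A\cup B\) in which exactly one atom of each block is true.

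The only delicate step is the degenerate case \(m=0\), where the product formula is undefined. There, nonnegativity together with \(\sum_x p^A_x=\sum_y p^B_y=0\) forces every \(p^A_x\) and every \(p^B_y\) to vanish. Setting \(w_{xy}=0\) therefore still reproduces both marginals, and the points \(\lambda_{xy}\) can be dropped, or kept with zero weight. I also expect one edge case to need a brief remark: when \(X\) or \(Y\) is empty, normalization gives \(m=0\), so this same case covers it. Beyond that, I anticipate no real obstacle. The essence is that any two distributions with equal total mass admit a coupling, and the product coupling supplies one explicitly.
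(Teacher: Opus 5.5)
Your proposal is correct and follows the paper's own argument. The paper proves the proposition with exactly this sample space: points $\lambda_s$ of weight $r_s$, and points $\lambda_{xy}$ carrying the product coupling $w_{xy}=p^A_x p^B_y/m$ (or zero when $m=0$), and it asserts without further computation that reading either coordinate returns $p^A$ or $p^B$. You add the row- and column-sum calculations, the consistency check on shared atoms, and the degenerate case $m=0$ (including empty $X$ or $Y$), all of which the paper leaves implicit.
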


This is just the elementary coupling theorem for two marginal distributions,
with the shared outcomes treated as already coupled.  Equivalently, after the
common mass has been fixed, each residual cell obeys the finite
Fr\'echet--Hoeffding bounds
\begin{equation}
        \max(0,p^A_x+p^B_y-m)
        \le w_{xy}\le
        \min(p^A_x,p^B_y),
        \label{eq:frechet-hoeffding}
\end{equation}
and Eq.~\eqref{eq:two-context-coupling} is just one point in this coupling
polytope.  Thus two-context intertwining has no nontrivial marginal
obstruction at the level of static weights.  The response functions realizing
those weights may still be very different: continuous or discontinuous,
affine or Malus-type, softmax-shaped or boundary-valued.  The obstruction
requires a sufficiently constrained web of local shadows, typically a cycle
or a Kochen--Specker-type pasting, so that all pairwise or context-wise
agreements cannot be realized by one global nonnegative distribution.

\subsection{Farkas alternative}

The extension problem can be written in the form
\begin{equation}
        Mq=b,\qquad q\ge 0,
        \label{eq:marginal-lp}
\end{equation}
where \(q\) is a nonnegative distribution over deterministic global
assignments and \(b\) collects the observed context probabilities, including
the relevant normalization constraints.  Farkas' lemma says that exactly one
of the following alternatives holds:
\begin{align}
&\exists q\ge 0 \text{ such that } Mq=b,                         \\
&\exists y \text{ such that } y^TM\ge 0 \text{ and } y^Tb<0 .
        \label{eq:farkas}
\end{align}
The second line is a separating inequality.  Garg and Mermin used precisely
this logic to formulate the question of whether specified pair distributions
can arise as marginals of compatible higher-order
distributions~\cite{Garg1984}.  Fine's theorem for Bell inequalities is a
celebrated special case of the same joint-distribution
viewpoint~\cite{Fine-82}; Pitowsky's correlation polytopes place the
construction in convex-geometric form~\cite{pitowsky-86}.

Equation~\eqref{eq:marginal-lp} is also the finite-dimensional skeleton of
simplex-embeddability tests.  A finite prepare-and-measure fragment is
classical when its probability data admit a factorization through a simplex.
Once the relevant extremal classical response assignments and linear
operational equivalences have been enumerated, the unknown convex weights
obey a feasibility system of the form~\eqref{eq:marginal-lp}.  Infeasibility
then has a Farkas dual certificate.  Recent generalized-probabilistic simplex
criteria and their linear-programming implementations make this operational
version explicit~\cite{schmid-2021-simplex,selby-2024-lp}.

\section{The pentagon: classical, quantum, and admissible bodies}

For a fixed exclusivity structure there are three natural nested convex sets:
a classical body, a Hilbert-space body, and a larger admissible-weight body.
The classical body is generated by deterministic two-valued states:
\begin{equation}
        \mathcal P_{\rm cl}
        =
        \conv\{\text{admissible }0/1\text{ assignments}\}.
\end{equation}
For graphs this is the stable-set polytope, denoted \(\STAB(G)\), possibly
with additional normalization equations for complete contexts.

One standard realization of the corresponding quantum body is obtained from
squared projections of a state onto an orthogonal representation:
\begin{equation}
        p_v=|\langle\psi|v\rangle|^2,
\end{equation}
or by the corresponding density-operator formula.  In graph language this is
related to Lov\'asz-type theta bodies
\cite{Lovasz1975,GroetschelLovaszSchrijver1986,cabello-severini-winter-2014}.
The still larger admissible-weight body, often described by no-disturbance
constraints in contextuality scenarios, is obtained by retaining only
nonnegativity, context-wise normalization, additivity on exclusive events,
and equality of the values assigned to the same atom in different contexts.

For one clique, these three bodies have the same operational shadow: the
simplex.  For two cliques pasted along a common clique, the explicit coupling
establishes only that there is no classical extension obstruction for the
compatible context marginals.  It does not imply that every such pair is
realizable by a fixed pair of quantum projective contexts.

Cycles change the situation.  The pentagon is the minimal cyclic illustration.
Let
\[
        C_i=\{a_i,a_{i+1},x_i\},\qquad i=1,\ldots,5,
\]
with indices modulo \(5\).  The cyclic atoms \(a_i\) are the intertwining
observables, and the atoms \(x_i\) are context-specific.  The assignment
\begin{equation}
        p(a_i)=\frac12,\qquad p(x_i)=0
        \quad (i=1,\ldots,5)
        \label{eq:pentagon-half-weight}
\end{equation}
obeys completeness in every context,
\[
        p(a_i)+p(a_{i+1})+p(x_i)=1,
\]
is additive on mutually exclusive events inside each Boolean block, and is
atom-consistent: the same intertwining atom has the same value wherever it
occurs.  This is deliberately weaker than noncontextual hidden-variable
embeddability in a classical simplex, and it is not Spekkens-style
operational noncontextuality.  In particular, it is not a convex mixture of
deterministic two-valued states.

The classical noncontextual bound on the five cyclic events is
\begin{equation}
        \sum_{i=1}^5 p_i\le 2,
\end{equation}
which is the Klyachko--Can--Binicio\u{g}lu--Shumovsky (KCBS) pentagon
inequality~\cite{Klyachko-2008,Badziag-2011,Bub-2009}.  By contrast, a
faithful quantum orthogonal representation reaches
\begin{equation}
        \sum_{i=1}^5 p_i=\sqrt{5}.
\end{equation}
This is the Lov\'asz number of the pentagon exclusivity
graph~\cite{lovasz-79,cabello-severini-winter-2014}.

The half-weight~\eqref{eq:pentagon-half-weight} gives
\(\sum_i p(a_i)=5/2\), so it is admissible and atom-consistent, yet lies
beyond both the classical bound and the usual KCBS/Lov\'asz quantum value.
It is therefore an explicit admissible, atom-consistent, but post-quantum
point in the gap between the classical and Hilbert-space bodies.  In the
language used here, the Hilbert-space squared-projection body escapes the
classical simplex body only after enough context-wise shadows have been
required to glue together; the exotic half-weight escapes still farther
inside the larger admissible/no-disturbance polytope.  It is best read as a
boundary object of the admissible theory, not as a Hilbert-space-realizable
quantum state.

The KCBS inequality can be written as an exact Farkas certificate, but the
all-ones row defining the inequality is not by itself in the sign convention
of Eq.~\eqref{eq:farkas}.  Let
\[
        c^Tb=\sum_{i=1}^5 p(a_i),
\]
and augment the system \(Mq=b\) by a normalization row \(n^Tb=1\), or,
equivalently, use the corresponding linear combination of the
context-normalization rows.  Every deterministic assignment obeys
\[
        c^TM\le 2n^TM.
\]
Hence the Farkas vector \(y=2n-c\) satisfies
\begin{equation}
        y^TM\ge 0,\qquad
        y^Tb=2-\frac52=-\frac12<0.
        \label{eq:pentagon-farkas-certificate}
\end{equation}
This is precisely the dual certificate for the nonexistence of a nonnegative
convex combination of deterministic valuations realizing
\(b=(1/2,\ldots,1/2)\).

The half-weight can be approached inside the admissible polytope by the
one-parameter path
\begin{equation}
        p_r(a_i)=\frac{1}{2+r},\qquad
        p_r(x_i)=\frac{r}{2+r},\qquad r\ge 0.
        \label{eq:pentagon-admissible-path}
\end{equation}
The uniform point occurs at \(r=1\), whereas the exotic half-weight is the
boundary limit \(r\to0\).

This path also admits a generalized-softmax reading.  If the two cyclic atoms
in each context receive a common score \(u_0\), the context-specific atom
receives a score \(t\), and
\[
        r=\frac{g(t)}{g(u_0)},
\]
then Eq.~\eqref{eq:local-softmax} gives precisely
Eq.~\eqref{eq:pentagon-admissible-path}.  For a strictly positive exponential
link, \(r=0\) is reached only as a limiting response rather than at finite
scores; a Heaviside-type link can instead reach the boundary
discontinuously.  This algebraic representation is only an admissible
interpolation.  It does not claim that every point on the path is generated
by one calibrated physical mechanism.  Rather, it reinforces the distinction
between static consistency, classical embeddability, Hilbert-space
realizability, and a calibrated dynamical response law.

\section{Freeze the context and co-vary the preparation}
One may equally well freeze the context and co-vary the preparation: a
half-wave plate before a fixed analyzer, or a calibrated pulse before a
fixed detector basis.  For a single context the two sweeps are related by
moving the group action from the projectors to the state inside the Born
bracket, and the calibration requirements apply verbatim to the preparation
knob.  An \emph{uncalibrated} state sweep, by contrast, discriminates
nothing: by Proposition~1 it covers the entire simplex.  At the
multi-context level the symmetry is only partial.  The gluing problem of
Sec.~\ref{sec:two-contexts} is generated by atoms shared between
measurement contexts, at one fixed state; the nontrivial state-side
analogue is instead preparation noncontextuality in the sense of Spekkens,
in which distinct convex decompositions of the same mixed state must
receive one ontological representation.  The simplex-embedding
tests~\cite{schmid-2021-simplex,selby-2024-lp} treat both sides
symmetrically.

\section{Conclusion}

The moral is not that quantum probability is secretly classical, nor that a
single probability vector contains enough information to reveal its origin.
The moral is more local: one must keep track of what has been thrown away.
When the generating geometry is forgotten, a single context leaves only a
shadow.  When the physical knob is retained, the shadow begins to move, and
its calibrated, symmetry-constrained motion can distinguish specified affine,
Malus-type, softmax, and threshold response models.

A calibrated apparatus sweep nevertheless distinguishes only model classes
whose physical action, covariance, and causal restrictions have been
specified.  Reparameterizing a plotted scalar is not physically harmless
unless the reparameterization intertwines the independently calibrated group
action.  Conversely, one observed response curve does not by itself exclude
arbitrary classical stochastic models or establish a Bell- or
Kochen--Specker-type impossibility theorem.

Two weakly pasted shadows still need not betray anything nonclassical.  If two
contexts agree on their shared atoms, the remaining mass can be coupled by an
ordinary joint distribution.  The obstruction appears only when enough local
shadows are required to hang together as one global object.  The question
then becomes a linear feasibility problem: either there is a classical joint
model or simplex embedding, or Farkas' lemma supplies a separating
inequality.  The elastic-band models provide a laboratory parable for this
distinction: with adaptive realignment, context communication, or one
extended nonseparable object, they can produce quantum-looking pairwise
shadows, but not one Bell-local valuation glued across all four settings.

Thus the staged picture is: shadow, calibrated knob, weak pasting, global
web.  The frozen shadow hides classical, quantum, and exotic origins.  The
knob records how the shadow changes under symmetry-constrained variation.
Weak pastings remain extendable.  Rich webs can fail to extend.  Finally, the
pentagon half-weight reminds us that admissibility and atom-consistency are
not the same as classical embeddability, Hilbert-space realizability, or a
physical response law: the same static constraints can be obeyed by very
different knobs.

\begin{acknowledgments}
This research was funded in whole or in part by the
\textit{Austrian Science Fund (FWF)}
[Grant
\href{https://doi.org/10.55776/PIN5424624}
{digital object identifier (DOI): 10.55776/PIN5424624}].
The author acknowledges TU Wien Bibliothek for financial support through its
Open Access Funding Programme.

This text was partially created and revised with assistance from large
language models.  All content, ideas, and prompts were provided by the author.
\end{acknowledgments}

\bibliography{svozil}

\end{document}